\def\STANDALONE{true}
\ifdefined\STANDALONE{}
\theoremstyle{plain}
\newtheorem{thm}{Theorem}
\newtheorem{lem}[thm]{Lemma}
\newtheorem{lemma}[thm]{Lemma}
\newtheorem*{fact*}{\Fact}
\theoremstyle{definition}
\newtheorem{definition}[thm]{Definition}
\newtheorem{fact}[thm]{Fact}
\theoremstyle{remark}
\newtheorem{remark}[thm]{Remark}
\newcommand{\br}[1]{\left(#1\right)}
\newcommand{\brq}[1]{\left[ #1 \right]}
\newcommand{\eps}{\varepsilon}
\newcommand{\size}[1]{\left|#1\right|}
\newcommand{\setst}[2]{\left\{  #1\left|#2\right.\right\}  }
\newcommand{\set}[1]{\left\{  #1\right\}  }
\newcommand{\cost}[2]{cost\br{#1,#2}}
\newcommand{\ceil}[1]{\left\lceil{} #1\right\rceil{} }
\newcommand{\algunc}{\ensuremath{ALG_{unc}^{\ell}\br{d}}}
\newcommand{\OO}{\mathcal{O}}
\newcommand{\NP}{\textsf{NP}\xspace}
\newcommand{\FPT}{\textsf{FPT}\xspace}
\newcommand{\W}{\textsf{W}[1]\xspace}
\newcommand{\WW}{\textsf{W}[2]\xspace}
\newcommand{\UKM}{\textsc{Uncapacitated KM}\xspace}
\newcommand{\CKM}{\textsc{CKM}\xspace}
\newcommand{\UCKM}{\textsc{Uniform CKM}\xspace}
\newcommand{\NUCKM}{\textsc{Non-Uniform CKM}\xspace}
\ifdefined\DEBUG{}
\definecolor{marekgreen}{RGB}{0,185,0}
\definecolor{orange}{RGB}{255,128,0}
\definecolor{orange}{RGB}{255,128,0}
\definecolor{orange}{RGB}{255,128,0}
\definecolor{orange}{RGB}{255,128,0}
\newcommand{\mar}[1]{\textcolor{marekgreen}{#1}}
\newcommand{\mic}[1]{{\color{blue}{#1}}}
\newcommand{\mes}[1]{\textcolor{cyan}{#1}}
\def\rem#1{{\marginpar{\raggedright\scriptsize #1}}}
\newcommand{\jarr}[1]{\rem{\textcolor{red}{\(\bullet \) #1}}}
\newcommand{\marr}[1]{\rem{\textcolor{marekgreen}{\(\bullet \) #1}}}
\newcommand{\micr}[1]{\rem{\textcolor{blue}{\(\bullet \) #1}}}
\newcommand{\janr}[1]{\rem{\textcolor{orange}{\(\bullet \) #1}}}
\newcommand{\mesr}[1]{\rem{\textcolor{cyan}{\(\bullet \) #1}}}
\newcommand{\mar}[1]{#1}
\newcommand{\mic}[1]{#1}
\newcommand{\mes}[1]{#1}
\newcommand{\jarr}[1]{ }
\newcommand{\marr}[1]{ }
\newcommand{\micr}[1]{ }
\newcommand{\janr}[1]{ }
\newcommand{\mesr}[1]{ }
\title{\mic{Constant factor FPT approximation for capacitated k-median
}}
\date{}
\author{Marek	Adamczyk	\footnote{marek.adamczyk@mimuw.edu.pl} \\ University of Warsaw
\and Jarosław	Byrka \footnote{jby@cs.uni.wroc.pl}\\ University of Wrocław
\and Jan	Marcinkowski \footnote{jasiekmarc@cs.uni.wroc.pl} \\University of Wrocław
\and Syed M. Meesum \footnote{syedmohammad.meesum@uwr.edu.pl}\\ University of Wrocław
\and Michał	Włodarczyk \footnote{m.wlodarczyk@mimuw.edu.pl}\\ University of Warsaw
}
\begin{document}
  \maketitle{}

  \begin{abstract}
    Capacitated k-median is one of the few outstanding optimization problems for
    which the existence of a polynomial time constant factor approximation
    algorithm remains an open problem. In a series of recent papers algorithms
    producing solutions violating either the number of facilities or the
    capacity by a multiplicative factor were obtained. However, to produce
    solutions without violations appears to be hard and potentially requires
    different algorithmic techniques. Notably, if \mes{parameterized} by the number of
    facilities \(k\), the problem is also \(W[2]\) hard, making the existence of an
    exact FPT algorithm unlikely. In this work we provide an FPT-time constant
    factor approximation algorithm preserving both cardinality and capacity of
    the facilities.
\mic{The algorithm runs in time $2^{\OO(k\log k)}n^{\OO(1)}$ and achieves an approximation ratio of $7+\eps$.}
  \end{abstract}

% !TEX root = main.tex

\section{Introduction}\label{sec:intro}
%    This is normal text. \mar{This is Marek's text.} \jar{This is Jarek's text}
%    \mic{, and so on; you get the idea}.

%    \marr{and that's my debug margin note}
%    \janr{everyone has one}

For many years approximation algorithms and FPT algorithms were developed in parallel. Recently the two paradigms are being combined and provide intriguing discoveries in the intersection of the two worlds. It is particularly interesting in the case of problems for which we fail to progress improving the approximation ratios in polynomial time. An excellent example of such a combination is the FPT approximation algorithm for the \textsc{$k$-Cut} problem by Gupta et al.~\cite{gupta2018fpt}.

In this work we focus on the \textsc{Capacitated $k$-Median} problem, whose approximability attracted attention of many researchers. Unlike in the case of the \textsc{$k$-Cut} problem, it is still not clear what approximation is possible for \textsc{Capacitated $k$-Median} in polynomial time.
As shall be discussed in more detail in the following section, the best true approximation known is $\OO(\log k )$ based on tree embedding of the underlying metric. The other algorithms either violate the bound on the number of facilities or the capacity constraints.

Our main result is a $(7 + \epsilon)$-approximation algorithm for the \textsc{Capacitated $k$-Median} problem running in FPT($k$) time, that exploits techniques from both ---  approximation and \FPT{} --- realms.
The algorithm builds on the idea of clustering  the clients into \mar{$\ell =  \OO(k\cdot (\log n) / \eps)$} locations, \mar{which is similar to the approach from the $\OO(\log k )$-approximation algorithm, where one creates $\OO\br{k}$ clusters}. This is followed by guessing the distribution of the $k$ facilities inside these $\ell$ clusters.
Having such a structure revealed, we simplify the instance further by rounding particular distances and reduce the problem to linear programming  over a totally unimodular matrix.
%In case of uniform capacities this already essentially solves the problem, but for nonuniform capacities a more refined global procedure to decide the exact position of facilities within clusters is necessary.

\subsection{Problems overview and previous work}

% !TEX root = main.tex

%\section{Preliminaries}\label{sec:preliminaries}
    In the \textsc{Capacitated \(k\)-Median} problem (\CKM{}), we are given a
    set \(F\) of facilities, each facility \(f\) with a capacity \(u_f \in
    \mathbb{Z}_{\geqslant 0}\), a set \(C\) of clients, a  metric \(d\) over \(F
    \cup C\) and an upper bound \(k\) on the number of facilities we can open.
    A solution to the \CKM problem is a set \(S \subseteq F\) of at most \(k\) open facilities
    and a connection assignment \(\phi: C \to S\) of clients to open facilities
    such that \(\size{\phi^{-1}(f)} \leqslant u_f\) for every facility \(f \in S\).
    The goal of the problem is to find a solution that minimizes the connection cost \(\sum_{c \in C}d(c, \phi(c))\).

    In the case when all the facilities can
    serve at most \(u\) clients, for some integer \(u\), we obtain the
    \textsc{Uniform \CKM} problem. \marr{Although our \((7+\eps)\)-approximation for the non-uniform case yields a result for this version as well, we present a derivation for this case separately as it allows for a gradual presentation of our ideas.}

    \paragraph{Uncapacitated \(k\)-median} The standard \(k\)-median problem,
    where there is no restriction on the number of clients served by a facility,
    can be approximated up to a constant factor~\cite{charikar1999constant,
    arya2004local}. The current best is the \((2.675+\epsilon)\)-approximation
    algorithm of Byrka et al.~\cite{byrka2015improved}, which is a result of
    optimizing a part of the algorithm by Li and
    Svensson~\cite{li2013approximating}.

    \paragraph{Approximability of \CKM{}} As already stressed, \textsc{Capacitated \(k\)-Median} is among
    few remaining fundamental optimization problems for which it is not clear if
    there exist polynomial time constant factor approximation algorithms. All
    the known algorithms violate either the number of facilities or the
    capacities. In particular, already the algorithm of Charikar et
    al.~\cite{charikar1999constant} gave 16-approximate solution for the uniform
    capacitated \(k\)-median violating the capacities by a factor of 3. Then
    Chuzhoy and Rabani~\cite{chuzhoy2005approximating} considered general
    capacities and gave a 50-approximation algorithm violating capacities by a
    factor of 40.

    The difficulty appears to be related to the unbounded integrality gap of the
    standard LP relaxation. To obtain integral solutions that are bounded with respect to
    the fractional solution to the standard LP, one has to either allow the
    integral solution to open twice as much facilities or to violate the capacities
    by a factor of two. LP-rounding algorithms essentially matching these limits
    have been obtained~\cite{aardal2015approximation, BFRS15}.

    Subsequently,  Li  broke  this  integrality  gap  barrier  by  giving  a
    constant  factor  algorithm  for the capacitated k-median by opening \((1 +
    \eps) \cdot k\) facilities~\cite{Li15uniform,Li16non_uniform}. Afterwards
    analogous results, but violating the capacities by a factor of \((1 +
    \eps)\) were also
    obtained~\cite{DBLP:conf/ipco/ByrkaRU16,DBLP:conf/icalp/DemirciL16}.

    The algorithms with \((1+\eps)\) violations are all based on strong LP
    relaxations containing additional constraints for subsets of facilities.
    Notably, it is not clear if these relaxations can be solved exactly in
    polynomial time, still they suffice to construct an approximation algorithm
    via the ``round-or-separate” technique that iteratively adds consistency
    constraints for selected subsets. Although while spectacularly breaking the
    standard LP integrality bound, these techniques appear insufficient to yield
    a proper approximation algorithm that does not violate constraints.

\mar{
The only true approximation for \CKM{} known is a folklore \(\OO\br{\log k}\) approximation algorithm
    that can be obtained via the metric tree embedding with expected logarithmic
    distortion~\cite{FakcharoenpholRT03}.}
\mic{To the best of our knowledge,
this result has not been explicitly published, but it can be obtained similarly to the \(\OO\br{\log k}\)-approximation for \UKM by Charikar~\cite{DBLP:conf/stoc/CharikarCGG98}.
For the sake of completeness and since it follows easily from our framework, we give its proof in Section~\ref{sec:folklore} without claiming credit for it. } \marr{Previously it was written that the $\lg k$-approximation was indicated in the work of~\cite{charikar1999constant} but after looking at this paper I don't think it's the case}

    This \(\OO\br{\log k}\)-approximation is in contrast with other
    capacitated clustering problems such as facility location and k-center,
    for which constant factor approximation algorithms are
    known~\cite{korupolu2000analysis,cygan2012lp}.

%The uncapacitated $k$-median problem is very much the same as \CKM defined above  except that we do not have upperbounds $u_f$ for facilities, and so each facility can serve as many clients as we shall connect to it.

\subsection{Parameterized Complexity}
A parameterized problem instance is
    created by associating an input instance with an integer parameter \(k\). We
    say that a problem is \emph{fixed parameter tractable} (\FPT{}) if any
    instance \((I, k)\) of the problem can be solved in time
    \(f(k)\cdot |I|^{\OO(1)}\), where \(f\) is an arbitrary computable function of
    \(k\).
    %Throughout this paper, we use the standard notation \(\OO^*\) to hide factors polynomial in \(|I|\).\footnote{That is, \(\OO^*(f(k))=f(k)\cdot|I|^{\OO(1)}\).}
    We say that a problem is \FPT{} if
    it is possible to give an algorithm that solves it in running time of the
    required form. Such an algorithm we shall call a \emph{parameterized algorithm}.

    To show that a
    problem is unlikely to be \FPT{}, we use parameterized reductions analogous
    to those employed in the classic complexity theory. Here, the concept of
    \textsf{W}-hardness replaces the one of \NP-hardness, and we need not only to
    construct an equivalent instance in \FPT{} time, but also ensure that the
    size of the parameter in the new instance depends only on the size of the
    parameter in the original instance.
In contrast to the \NP-hardness theory, there is a hierarchy of classes $\FPT = \textsf{W[0]} \subseteq \textsf{W[1]} \subseteq \textsf{W[2]} \subseteq \dots$ and these containments are believed to be strict.
If there exists a parameterized reduction transforming a problem known to
    be \textsf{W[t]}-hard for $t>0$ to another problem \(\Pi \), then the problem \(\Pi \) is
    \textsf{W[t]}-hard as well. This provides an argument that \(\Pi \) is unlikely to admit an algorithm with
    running time \(f(k)\cdot |I|^{\OO(1)}\).
    %unless the \textsf{W}-hierarchy collapses.

\paragraph{} We begin with an argument that allowing \FPT time  for (even uncapacitated) \textsc{$k$-Median} should not help in finding the optimal solution and we still need to settle for approximation.

\begin{fact}
The \textsc{Uncapacitated $k$-Median} problem is \WW-hard when parameterized by $k$, even on metrics induced by unweighted graphs.
\end{fact}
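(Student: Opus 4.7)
The plan is to reduce from \textsc{Dominating Set}, which is the canonical \WW-hard problem when parameterized by the size of the sought set. Given an instance $(G,k)$ of \textsc{Dominating Set} with $G=(V,E)$ and $|V|=n$, I would construct an instance of \textsc{Uncapacitated $k$-Median} by taking $F = C = V$, setting the distance $d$ to be the shortest-path metric in the unweighted graph $G$, and keeping the parameter equal to $k$. This transformation is clearly polynomial, and since the parameter is preserved, it is a valid parameterized reduction provided the decision equivalence holds.

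For the equivalence, I would analyze any candidate set $S \subseteq V$ with $|S|=k$. Clients in $S$ can be assigned to themselves at cost $0$, so the minimum cost of serving all clients from $S$ equals $\sum_{c \in V \setminus S} d(c,S)$. Since $d$ is a graph metric and $|V \setminus S| = n-k$, this sum is at least $n-k$, with equality if and only if every vertex outside $S$ has a neighbour in $S$ — that is, if and only if $S$ is a dominating set of $G$. Consequently, the \textsc{Uncapacitated $k$-Median} optimum on the constructed instance is at most $n-k$ iff $G$ admits a dominating set of size $k$, while in any \NO-instance the optimum is at least $n-k+1$. Asking whether the \CKM objective is at most the threshold $n-k$ thus decides \textsc{Dominating Set}.

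The step I expect to require the most care is confirming that the instance genuinely has a metric induced by an \emph{unweighted} graph. This is immediate here because the input graph itself supplies the metric via shortest-path distances, without any need for auxiliary gadget vertices that might enlarge the facility set $F$ or inflate the parameter. Since the only subtlety — distinguishing a cost of exactly $n-k$ from $n-k+1$ — is handled by the integrality of graph distances, no gap amplification is needed and the reduction transfers \WW-hardness of \textsc{Dominating Set} directly to \textsc{Uncapacitated $k$-Median} parameterized by $k$.
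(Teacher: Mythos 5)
Your proposal is correct and follows essentially the same route as the paper: a parameter-preserving reduction from \textsc{Dominating Set} using the shortest-path metric of $G$ itself, with the observation that a size-$k$ dominating set exists iff the $k$-median optimum equals $|V(G)|-k$. You merely spell out the cost accounting (cost at least $n-k$, equality iff every vertex outside $S$ is dominated) more explicitly than the paper does, which is a harmless elaboration rather than a different argument.
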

\begin{proof}
Consider an instance of the \textsc{Dominating Set} problem,
which is \WW-hard when parameterized by the solution size.
A dominating set of size at most $k$ exists in graph $G$ if and only if we can find a vertex set $S$ of size $k$, such that all other vertices are at distance 1 from $S$.
This is equivalent to the solution to \textsc{Uncapacitated $k$-Median} on the metric induced by $G$ being of size exactly $|V(G)|-k$.
%\mesr{Is the cost function a metric in the reduction ?}
\end{proof}

\paragraph{Parameterized Approximation}
In recent years new research directions emerged in the intersection of the theory of approximation algorithms and the \FPT{} theory.
It turned out that for some problems that are intractable in the exact sense, parameterization still comes in useful when we want to reduce the approximation ratio.
Some examples are $(2-\eps)$-approximation for \textsc{$k$-Cut}~\cite{gupta2018fpt} or
$f(\mathcal{F})$-approximation for \textsc{Planar-$\mathcal{F}$ Deletion}~\cite{FLMS12} for some implicit function $f$.
The dependency on $\mathcal{F}$ was later improved, leading to
$\OO(\log k)$-approximations for, e.g., \textsc{$k$-Vertex Separator}\cite{Lee18} and \textsc{$k$-Treewidth Deletion}~\cite{gupta-arxiv}.

On the other hand some problems parameterized by the solution size have been proven resistant to such improvements.
Chalermsook et al.~\cite{pasin-focs17} observed that under the assumption of $\textsf{Gap-ETH}$ there can be no parametrized approximation with ratio $o(k)$ for \textsc{$k$-Clique} and none with ratio
$f(k)$ for \textsc{$k$-Dominating Set} (for any function $f$).
Subsequently $\textsf{Gap-ETH}$ has been replaced with a better established hardness assumption $\FPT \ne \W$ for \textsc{$k$-Dominating Set}~\cite{pasin-stoc18}.

\subsection{Organization of the paper}
\mar{
Our main result is stated in Theorem~\ref{thm:nuckm} (Section~\ref{sec:nonuniform}), where we present a $(7 + \epsilon)$-approximation algorithm for the \NUCKM problem running in FPT($k$) time.
}

\mar{To obtain this result we need two ingredients.
First is a metric embedding that reduces the problem to a simpler instance, called $\ell$-centered, what is described in Section~\ref{sec:l-centered}.
This reduction provides a richer structure, which can be exploited to obtain an $\OO\br{\log k}$-approximation via tree embeddings~\cite{FakcharoenpholRT03}. As already mentioned, similar approach was presented by Charikar et al.~\cite{DBLP:conf/stoc/CharikarCGG98} in their algorithm for the uncapacitated setting. We present this result for the sake of completeness in Section~\ref{sec:folklore}, after the main result.
}

\mic{
The second ingredient is a parameterized algorithm for the $\ell$-centered instances. Since it is simpler in the uniform setting, we solve it in Section~\ref{sec:uniform} as a warm up before the main result.
This way the new ideas are being revealed gradually to the reader. %Finally the full algorithm for \NUCKM is described in Section~\ref{sec:nonuniform}.
}

\iffalse
\paragraph{} Relying on this result, we can rule out the existence of an exact \FPT time for (even uncapacitated) \textsc{$k$-Median}.
What is more, it does help if my relax the problem and look for solution that open at most $f(k)$ facilities for any function $f$.
This means that from the perspective of parameterized complexity
we still need to settle for approximation.
\fi

% !TEX root = main.tex

\section{\(\ell \)-Centered instances}\label{sec:l-centered}

	Suppose we work with a graph on nodes \(F\cup C\), on which we are given a
	metric \(d\). In our considerations the set \(F\cup C\) will be fixed
	throughout, however we will be modifying the metric over it.
	Consider an algorithm \(ALG\) which  produces a solution
	\(ALG\br d\) for a metric \(d\).\micr{What about not using $ALG$ and $ALG(d)$ but just $SOL$ for a solution?}
    This solution can be seen as a mapping which we explicitly denote by
	\(\phi^{ALG\br d}\). Its cost in the metric \(d'\) equals
	\(\sum_{c\in C}d'\br{c,\phi^{ALG\br{d}}}\) which we shall briefly denote by
	\(cost\br{\phi^{ALG\br{d}},d'}\).
    \mic{The second argument is useful, when an algorithm \(ALG\)
	produces a solution (mapping) \(ALG\br{d}\) with respect to metric \(d\), but later
	on we may be interested in its cost over a different metric.}
    %in which case we can denote it briefly by \(\cost{\phi^{ALG\br{d}}}{d'}\).
    %and in this case the cost will
	%be concisely described as \(cost\br{\phi^{ALG\br{d}},d'}\).
    Also, let
	\(OPT\br{d}\) denote the optimum solution for the \(\CKM{}\) problem on metric
	\(d\).

	In order to solve \CKM{}, we shall invoke an algorithm for
	\UKM{} as a subroutine. Let \(ALG^\ell_{unc}\br{d}\) be a relaxed solution that opens up to $\ell \ge k$ facilities and can break the capacity constraints.
    \iffalse
	the \UKM{} problem over metric \(d\).
    \micr{how do we distinguish between \(ALG_{unc}\br{d}\) that open $k$ or $k\cdot\lg n$ facilities?}
    \fi
    It induces a mapping
	which, for consistency, we shall denote by \(\phi^{ALG^\ell_{unc}\br{d}}\).
Observe that in this mapping every client can be connected to the closest open facility.
    \iffalse
    Note that it may not satisfy the capacity constraints.
    Though, but that is fine,
	since it is just a subroutine.
    In a hindsight, since uncapacitated version
	is easier than the capacitated one, we think of \(ALG^{unc}\br{d}\) as being
	an approximation algorithm for \UKM{} for which we will have an obvious
    \fi
\mic{Since \UKM{} admits constant approximation algorithms, we can work with solutions satisfying:
 \(\cost{\phi^{ALG^\ell_{unc}\br d}}d
 %=O\br{\cost{\phi^{OPT_{unc}\br d}}d}
 =O\br{\cost{\phi^{OPT\br d}}d} \).
 The larger $\ell$ we allow in the relaxation,
 the smaller constant we will be able to achieve
 in the relation above.}

	Using such an algorithm for \UKM{} as a subroutine, we can find a simpler metric to work with.
    %We describe now a construction of such a metric.
    First we build a graph which will induce the metric. Let
	$F(\algunc)$ be the set of facilities opened by $\algunc$. For each
such a facility $f$ we create a copy vertex \(s^f\), which is at
	distance \(0\) from~\(f\). We denote the set of copies by \(S\), i.e., \(S =
	\setst{s^f}{f\in F(\algunc)}\). Given that we demand the distance from
	\(f\) to \(s^f\) to be \(0\), we can naturally extend the metric \(d\) to
	the set \(C\cup F \cup S\). To distinguish facilities from
	$F(\algunc)$ from their copies \(S\), we shall call each copy \(s\in S\) a \emph{center}.

	We build a complete graph on \(S\) and preserve the metric $d$ therein. For every node \(v \not\in S\), be it either a
	client from \(C\) or a facility from \(F\), we place an edge to the closest
	(according to the extended \(d\)) \marr{to jest w sumie oczywiste ale trzeba napisac ze $\algunc$ tez przypisuje do najblizszych centrow klientow; to jest kluczowe zalozenie w pozniejszym lemacie 2}center \(s^{v} \in S\) and set its length to \(d\br{v,s^{v}}\). We call such a graph \(\ell\)-centered and refer to its induced metric as \(d_{\ell}\).

\iffalse
Such a simpler metric induced by the
	\(k\)-centered graph will be used in Section~\ref{sec:folklore} to obtain a
	\(\OO\br{\lg k}\)-approximation. It also would allow us to obtain a constant
	approximation FPT algorithm. However, using a metric where we have more than
	\(k\) centers will allow us to obtain a better approximation. And we can do
	that because the centers \(S\) are not the final solution, so this set can
	actually have more elements than \(k\). Hence, let us allow ourselves to use
	this construction for arbitrary number of centers.
\fi

	\begin{definition}\label{def:l-centered}
		An instance of \CKM{} is called \(\ell \)-\emph{centered} if the
		metric, which we shall denote by \(d_{\ell}\), is induced by a weighted
		graph \(G(F\cup C\cup S,E)\) such that
		\begin{enumerate}
			\item \(|S|\leqslant\ell\),
			\item \(\binom{S}{2} \subseteq E\), i.e., \(S\) forms a clique,
			\item for every \(v \in C\cup F\) there is only one edge incident to
				\(v\) in \(E\), and it connects \(v\) to some \(s^v \in S\).
		\end{enumerate}
	\end{definition}

	\begin{figure}
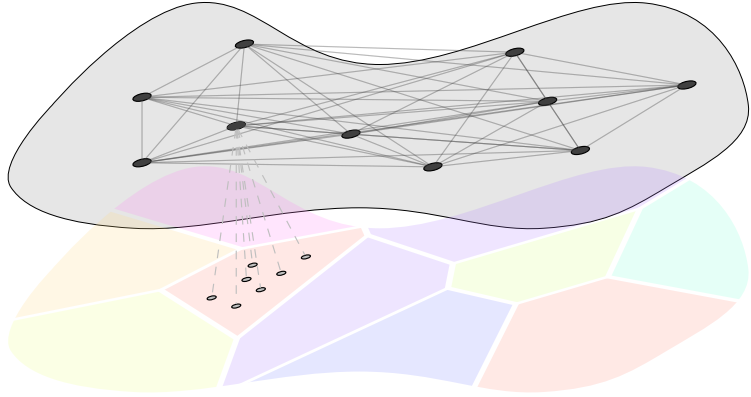

		\centering{}
        \ifdefined\STANDALONE
			\includestandalone[width=.8\columnwidth]{figures/voronoi}
        \fi
        \caption{An \(\ell\)-centered instance. In the upper layer there is a set
			\(S\) of \(l\) vertices connected as a clique. The rest of vertices
			are divided into separate \mar{cluster}s. Vertices in a single \mar{cluster} are only
			connected to their center in the set \(S\).}\label{fig:l-centered}
	\end{figure}

For a center \(s\in S\) we shall say that all
	nodes from $F \cup C$ that are connected
	to \(s\) form a \emph{cluster} of \(s\).
    If we consider only nodes from $F$,
    then \mar{we talk about}
 an $f$-cluster of $s$, denoted $F(s)$.

	\janr{Skoro mamy \(F(s)\), to moze sie jeszcze szarpniemy na \(C(s)\);--- tylko potrzebujesz $C(s)$ w ktorymkolwiek momencie to mozna, jak najbardziej, jeszcze jak }

	In the following lemma we relate the cost of embedding the optimum
	solution \(OPT\br{d}\) from a metric \(d\) to \(d_{l}\).

	\begin{lemma}[Embedding \(d\) into \(l\)-centered metric \(d_{l}\)]\label{lem:embedding-to-dl}
 		Let $\algunc$ be a solution for the \UKM problem on metric $d$ from which we construct the $\ell$-centered instance.
        Optimal solution \(OPT\br{d}\) can be embedded into an \(l\)-centered
		metric \(d_{l}\) with the cost relation being
		\[
			\cost{\phi^{OPT\br{d}}}{d} \leqslant \cost{\phi^{OPT\br{d}}}{d_{l}}
			\leqslant 3 \cdot \cost{\phi^{OPT\br d}}{d}
			+ 4 \cdot \cost{\phi^{\algunc}}{d}.
		\]
	\end{lemma}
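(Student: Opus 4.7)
The plan is to verify the two inequalities separately. The left-hand inequality is a routine structural observation: since \(d_{\ell}\) is the shortest-path metric of a graph whose edge weights come directly from \(d\), any path between two vertices in that graph has length at least their \(d\)-distance by the triangle inequality in \(d\). Applying this to the unique path \(c \to s^{c} \to s^{f} \to f\) gives \(d(c,f) \leqslant d_{\ell}(c,f)\) for every client \(c\) and facility \(f\); summing over the optimum assignment yields the lower bound.

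For the upper bound, I would fix a client \(c\) and let \(o = \phi^{OPT\br{d}}(c)\) be its assigned facility in the optimum for \(d\). By Definition~\ref{def:l-centered} the three edges of the \(c\)-to-\(o\) path in the \(\ell\)-centered graph give
\[
 d_{\ell}(c,o) \;=\; d(c,s^{c}) + d(s^{c}, s^{o}) + d(s^{o}, o),
\]
so the task reduces to bounding each summand in terms of \(d(c,o)\) and \(d(c,\phi^{\algunc}(c))\). A single triangle inequality on the middle term, \(d(s^{c},s^{o}) \leqslant d(s^{c},c) + d(c,o) + d(o,s^{o})\), collapses it into the two outer summands and leaves
\[
 d_{\ell}(c,o) \;\leqslant\; 2\,d(c,s^{c}) + d(c,o) + 2\,d(o,s^{o}).
\]

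Next I would identify the two remaining \(d\)-distances with quantities that appear in the cost of \(\algunc\). The key observation is that the nearest-center rule used to build the \(\ell\)-centered graph coincides with how \(\algunc\) routes clients to open facilities: \(s^{c}\) is (a copy of) the open facility of \(\algunc\) closest to \(c\), so \(d(c,s^{c}) = d(c,\phi^{\algunc}(c))\). This handles the first summand directly.

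The one subtle point, and what I expect to be the main obstacle, is the term \(d(o,s^{o})\): it measures the distance between a facility opened by \(OPT\br{d}\) and its nearest center, whereas the cost of \(\algunc\) records only client-to-facility distances. The remedy is to route through the client \(c\) already under consideration. Since \(s^{o}\) is by construction the facility of \(\algunc\) closest to \(o\), we get \(d(o,s^{o}) \leqslant d(o,\phi^{\algunc}(c)) \leqslant d(o,c) + d(c,\phi^{\algunc}(c))\). Substituting this back produces \(d_{\ell}(c,o) \leqslant 3\,d(c,o) + 4\,d(c,\phi^{\algunc}(c))\), and summing over \(c \in C\) yields exactly \(\cost{\phi^{OPT\br{d}}}{d_{\ell}} \leqslant 3\,\cost{\phi^{OPT\br{d}}}{d} + 4\,\cost{\phi^{\algunc}}{d}\). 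This detour through \(c\) is what creates both the factor \(3\) in front of the optimum cost and the factor \(4\) in front of the \(\algunc\) cost.
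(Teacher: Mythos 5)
Your proposal is correct and follows essentially the same route as the paper: you decompose \(d_{\ell}(c,o)\) along the path \(c \to s^{c} \to s^{o} \to o\), use the nearest-center property to bound \(d(o,s^{o}) \leqslant d(o,c) + d(c,s^{c})\), and identify \(d(c,s^{c})\) with the connection cost of \(\algunc\), arriving at the same per-client bound \(3\,d(c,o) + 4\,d(c,s^{c})\). The only difference is cosmetic ordering — the paper first bounds the facility-to-center term and then the center-to-center term, while you apply the triangle inequality to the middle term first — but the underlying inequalities are identical.
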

	\begin{proof}
\marr{Lemma 2 shouldn't be done for $OPT(d)$ but rather for any lagorithm for metric $d$; thanks to this we will not have to write Lemma 3}
		Let \(c\) be a client connected to facility \(f_{c}\) in the optimal
		solution \(OPT\br{d}\). Let \(s^c\) be the center closest to \(c\)
		within \(S\) (the \(\ell \)-center), and let \(s^{f_c}\) be the center
		closest to \(f_{c}\). First let us note that \(d_{l}\br{c,f_{c}} =
		d\br{c,s^c} + d\br{s^c,s^{f_c}} + d\br{s^{f_c},f_{c}}\). Now let us
		bound the terms \(d\br{f_{c},s^{f_c}}\) and \(d\br{s^c,s^{f_c}}\)
		separately.

		\begin{figure}[h]
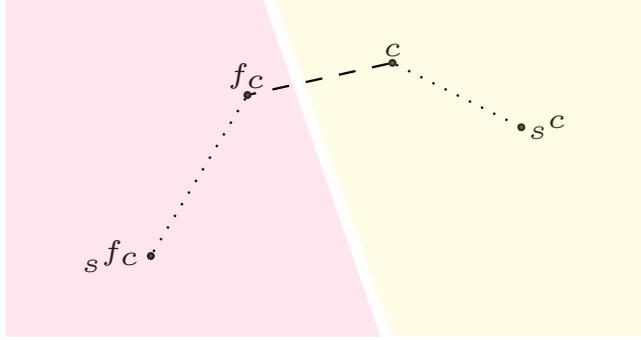

			\centering
\ifdefined\STANDALONE{}
\includestandalone[width=.6\columnwidth]{figures/triangle}
\fi
			\caption{Situation in Lemma~\ref{lem:embedding-to-dl}.
            %\(c\) is a client.
            In the optimum solution to the \CKM{} instance, client $c$ is
				connected to the facility \(f_c\). In the \(\ell\)-centered
				instance $c$ resides in a cell, where \(s^c\) is a center. The
				center of \(f_c\) is \(s^{f_c}\).}\label{fig:ineq}

		\end{figure}
                \marr{can we make to superscript $f_c$ in $s^{f_c}$ smaller?; tried small and tiny but didn't work}
		\begin{fact}\label{fact:fcfprimc}
			For every client \(c\) and its facility \(f_c\) from \(OPT\) we have
			\(d\br{f_{c},s^{f_{c}}} \leqslant d\br{f_{c},c}+d\br{c,s^c}\).
		\end{fact}
		\begin{proof}
			Since \(s^{f_c}\) is the closest \(\ell\)-center to the facility
			\(f_c\), we have that \(d\br{f_{c},s^{f_{c}}} \leqslant
			d\br{f_{c},s^c}\). At the same time, from the triangle inequality it
			follows that \(d\br{f_c, s^c} \leqslant d\br{f_c, c} + d\br{c, s^c}
			\).
		\end{proof}
		\begin{fact}\label{fact:cprimfprimc}
			For each \(c\) we have \(d\br{s^c, s^{f_c}} \leqslant
			2\br{d\br{f_{c},c} + d\br{c,s^c}}\).
		\end{fact}
		\begin{proof}

			From the triangle inequality we know that
			\[
				d\br{s^c,s^{f_{c}}} \leqslant d\br{s^c,c} + d\br{c,f_{c}}
				+ d\br{f_{c},s^{f_{c}}}.
			\]
			From Fact~\ref{fact:fcfprimc} we also know that
			\(d\br{f_{c},s^{f_{c}}} \leqslant d\br{f_{c},c}+d\br{c,s^c}\), and
			combining the two inequalities we get \(d\br{s^c,s^{f_{c}}}
			\leqslant d\br{s^c,c} + d\br{c,f_{c}} + d\br{f_{c},s^{f_{c}}}
			\leqslant 2\br{d\br{f_{c},c}+d\br{c,s^c}}\).

		\end{proof}
		These facts imply
		\begin{eqnarray*}
			d_{l}\br{c,f_{c}} & =
				& d\br{c,s^c} + d\br{s^c,s^{f_{c}}} + d\br{s^{f_{c}},f_{c}} \\
				& \leqslant & d\br{c,s^c} + d\br{s^c,s^{f_{c}}} + \br{d\br{f_{c},c} + d\br{c,s^c}} \mbox{\qquad (from Fact~\ref{fact:fcfprimc})} \\
				& \leqslant & d\br{c,s^c} + 2\br{d\br{f_{c},c} + d\br{c,s^c}} + \br{d\br{f_{c},c}+d\br{c,s^c}} \qquad\mbox{(from Fact~\ref{fact:cprimfprimc})} \\
				& = & 3\cdot d\br{f_{c},c}+4\cdot d\br{c,s^c},
		\end{eqnarray*}
		which implies the second inequality from the statement of
		Lemma~\ref{lem:embedding-to-dl}. The first one directly comes from the
		triangle inequality
        \[
        	d\br{c,f_{c}} \leqslant d\br{c,s^c} + d\br{s^c, s^{f_{c}}} +
        	d\br{s^{f_{c}},f_{c}} = d_{l}\br{c,f_{c}},
        \]
        completing the whole proof.

	\end{proof}

Another lemma is quite simple. Its proof just comes from the fact
that metric $d_{l}$ dominates the metric $d$, i.e., $d_{l}\br{u,v}\geqslant d\br{u,v}$
for all pairs of vertices $u,v \in C \cup F$.

\begin{lem}[Going back from $l$-centered metric $d_{l}$ to $d$]\label{lem:back-to-d}
 Any solution for the $l$-centered metric $d_{l}$ can be embedded
back into $d$ without any loss:
\[
\cost{\phi^{ALG\br{d_{l}}}}{d_{l}}\geqslant\cost{\phi^{ALG\br{d_{l}}}}d.
\]
\end{lem}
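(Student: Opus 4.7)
The plan is to verify the hint stated right before the lemma: that the $\ell$-centered metric $d_\ell$ dominates $d$ pointwise on $C\cup F$, and then observe that cost is monotone in the metric once the assignment $\phi$ is fixed.

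First I would unpack the construction. The graph underlying $d_\ell$ has edges only of two kinds: (i) the clique on $S$, where each edge $\{s,s'\}$ has length $d(s,s')$ (here I use the natural extension of $d$ to $C\cup F\cup S$ in which $d(s^f,f)=0$), and (ii) one ``pendant'' edge from every $v\in C\cup F$ to its chosen center $s^v\in S$ of length $d(v,s^v)$. Hence every individual edge weight equals the $d$-distance between its endpoints. Now for any $u,v\in C\cup F$, consider a shortest $u$-$v$ path in the $\ell$-centered graph; writing its consecutive vertices $u=x_0,x_1,\dots,x_t=v$, the $d_\ell$-length is $\sum_{i=0}^{t-1} d(x_i,x_{i+1})$, which by the triangle inequality in $d$ is at least $d(u,v)$. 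Therefore $d_\ell(u,v)\geq d(u,v)$ for all $u,v\in C\cup F$.

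Next I would apply this pointwise domination. Let $\phi=\phi^{ALG(d_\ell)}$ denote the assignment produced by the algorithm on $d_\ell$. Since $\phi$ is a function from $C$ to some $S\subseteq F$ with $|S|\le k$ and $|\phi^{-1}(f)|\le u_f$, its feasibility as a \CKM{} solution does not depend on the metric, so $\phi$ is also a feasible solution in the instance with metric $d$. Comparing costs termwise,
\[
\cost{\phi}{d_\ell}=\sum_{c\in C}d_\ell\bigl(c,\phi(c)\bigr)\geq \sum_{c\in C}d\bigl(c,\phi(c)\bigr)=\cost{\phi}{d},
\]
which is exactly the inequality claimed.

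I do not anticipate a real obstacle here: the only thing to be careful about is that the extension of $d$ to $C\cup F\cup S$ used to define the edge weights still satisfies the triangle inequality, which is immediate because each $s^f$ is identified with $f$ at distance $0$, so the extension is essentially an isometric duplication of the points of $S$'s pre-images in $F$. With that in hand, the pointwise domination argument above closes the proof.
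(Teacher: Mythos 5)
Your proof is correct and follows the same route as the paper, which justifies the lemma by the single observation that $d_{\ell}$ dominates $d$ pointwise on $C\cup F$; you merely spell out why the domination holds (every edge of the $\ell$-centered graph has length equal to the extended $d$-distance of its endpoints, so shortest paths cannot undercut $d$ by the triangle inequality) and then compare costs term by term. Nothing is missing.
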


Blending together Lemmas~\ref{lem:embedding-to-dl} and~\ref{lem:back-to-d} we can state the following Lemma about reducing the \CKM problem to $\ell$-centered instances.

\begin{lem}\label{lem:reduction-to-lcenter}
Suppose we are given a solution \mar{$\algunc$} for the \UKM problem on metric $d$ which opens $\ell$ centers, but $\beta$-approximates the optimum solution \mar{$OPT^k_{unc}\br{d}$} for \UKM problem with $k$ centers\mar{, i.e., $\cost{\algunc}{d} \leqslant \beta \cdot \cost{OPT^k_{unc}\br{d}}{d}$}.
Suppose we are given an $\alpha$-approximation algorithm for the \CKM problem on $\ell$-centered instances. If so, then we can construct an $\alpha\cdot\br{3+4\beta}$-approximation algorithm for \CKM on general instances.
\end{lem}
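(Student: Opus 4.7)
The plan is to chain together the two lemmas together with the approximation guarantee of $\algunc$ and then use the given $\alpha$-approximation on $\ell$-centered instances as a black box. Given the input instance with metric $d$, the algorithm will first compute the set of centers $F(\algunc)$ and the induced metric $d_\ell$ from the construction preceding Definition~\ref{def:l-centered}. Then it will run the promised $\alpha$-approximation for \CKM on the resulting $\ell$-centered instance, obtaining a mapping $\phi$. The output of the reduction will be $\phi$ evaluated in the original metric $d$.

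For the analysis, the main observation is that any feasible solution of the capacitated problem is also feasible for the uncapacitated one, so
\[
    \cost{\phi^{OPT^k_{unc}(d)}}{d} \;\leqslant\; \cost{\phi^{OPT(d)}}{d}.
\]
Combined with the assumed $\beta$-approximation guarantee of $\algunc$, this gives $\cost{\phi^{\algunc}}{d} \leqslant \beta \cdot \cost{\phi^{OPT(d)}}{d}$. Plugging this into the second inequality of Lemma~\ref{lem:embedding-to-dl} applied to $OPT(d)$ yields
\[
    \cost{\phi^{OPT(d)}}{d_\ell} \;\leqslant\; 3 \cdot \cost{\phi^{OPT(d)}}{d} + 4\beta \cdot \cost{\phi^{OPT(d)}}{d} \;=\; (3+4\beta) \cdot \cost{\phi^{OPT(d)}}{d}.
\]
In particular the optimum of \CKM on the $\ell$-centered instance is bounded by $(3+4\beta) \cdot \cost{\phi^{OPT(d)}}{d}$.

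Now let $\phi$ be the mapping returned by the $\alpha$-approximation on the $\ell$-centered instance. By definition,
\[
    \cost{\phi}{d_\ell} \;\leqslant\; \alpha \cdot \cost{\phi^{OPT(d_\ell)}}{d_\ell} \;\leqslant\; \alpha(3+4\beta) \cdot \cost{\phi^{OPT(d)}}{d}.
\]
Finally, Lemma~\ref{lem:back-to-d} guarantees that $\cost{\phi}{d} \leqslant \cost{\phi}{d_\ell}$, so treating $\phi$ as a solution in the original metric we obtain the promised approximation ratio $\alpha(3+4\beta)$.

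The proof is essentially a chaining exercise; the only delicate point is to ensure that the $\beta$-approximation, which is stated with respect to the uncapacitated optimum, can be upgraded to a bound with respect to the capacitated optimum. This relies on the trivial but crucial monotonicity that dropping the capacity constraints can only decrease the optimum, which lets the term $\cost{\phi^{\algunc}}{d}$ appearing in Lemma~\ref{lem:embedding-to-dl} be charged to $\cost{\phi^{OPT(d)}}{d}$.
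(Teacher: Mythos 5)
Your proof is correct and follows essentially the same chain as the paper: embed $OPT(d)$ into $d_\ell$ via Lemma~\ref{lem:embedding-to-dl}, charge $\cost{\phi^{\algunc}}{d}$ to the capacitated optimum using the fact that dropping capacities only decreases the optimum, apply the $\alpha$-approximation against $OPT(d_\ell)$ (using that $OPT(d)$ is a feasible solution on the $\ell$-centered instance), and return to $d$ via Lemma~\ref{lem:back-to-d}. No gaps.
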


\begin{proof}
Suppose that we have an $\alpha$-approximation \mar{solution} for
the $\ell$-centered instance with metric $d_l$, i.e., $ALG(d_{l})$
such that
\[
\cost{\phi^{ALG(d_{l})}}{d_{l}} \leqslant\alpha\cdot \cost{\phi^{OPT\br{d_{l}}}}{d_l}.
\]
Since $OPT(d)$ is some solution for the $\ell$-centered instance with metric $d_l$ we have
\[
\cost{\phi^{ALG(d_{l})}}{d_{l}}
\leqslant\alpha\cdot \cost{\phi^{OPT\br{d_{l}}}}{d_l}
\leqslant\alpha\cdot \cost{\phi^{OPT\br{d}}}{d_l}.
\]
And from Lemma~2 we have that
\begin{align*}
\cost{\phi^{ALG(d_{l})}}{d_{l}}
&\leqslant  \alpha\cdot \cost{\phi^{OPT\br{d_{l}}}}{d_l} \\
&\leqslant  \alpha\cdot \cost{\phi^{OPT\br{d}}}{d_l}\\
&\leqslant  \alpha \br{3\cdot\cost{\phi^{OPT\br d}}d+4\cdot \cost{\phi^{\algunc}}{d}}.
\end{align*}
Since solution $\algunc$ $\beta$-approximates the optimal solution $OPT^k_{unc}\br{d}$ for \UKM with $k$ centers on metric $d$, we have that
$$
\cost{\phi^{\algunc}}{d} \leqslant \beta \cdot \cost{\phi^{OPT^k_{unc}\br{d}}}{d} \leqslant \beta  \cdot \cost{\phi^{OPT\br{d}}}{d}.
$$
The second inequality $\cost{\phi^{OPT^k_{unc}\br{d}}}{d} \leqslant \cost{\phi^{OPT\br{d}}}{d}$ follows from an obvious fact that uncapacitated version of the problem is easier than the capacitated.
Hence
\begin{align*}
\cost{\phi^{ALG(d_{l})}}{d_{l}}
&\leqslant \alpha \br{3\cdot\cost{\phi^{OPT\br d}}d+4\cdot \cost{\phi^{\algunc}}{d}}\\
&\leqslant \alpha \br{3\cdot\cost{\phi^{OPT\br d}}d+4\beta\cdot \cost{\phi^{OPT\br{d}}}{d}}\\
&\leqslant \alpha\br{3+4\beta}\cdot\cost{\phi^{OPT\br d}}d.
\end{align*}
Since without any loss we can embed the solution $ALG(d_{l})$ for the $\ell$-centered metric $d_l$ into the initial metric $d$
 (Lemma~3) we obtain
an $\alpha\cdot\br{3+4\beta}$-approximation algorithm.
The claim follows.\end{proof}
% !TEX root = main.tex

\section{Constant factor approximation}
In this section we present the main result of the paper which is a $\br{7+\eps}$-approximation algorithm for the \NUCKM problem. We precede it with a $\br{7+\eps}$-approximation algorithm for the \UCKM problem to introduce the ideas gradually.
Both algorithms enumerate configurations of open facilities' locations, and as a subroutine we need to use an algorithm which, for a fixed configuration of $k$ open facilities, finds the optimal assignment of clients to facilities. This subroutine is presented in the following subsection.

\subsection{Optimal mapping subroutine}
We are given an $\ell$-centered metric instance $(F \cup C \cup S, d_{\ell})$ of the $k$-median problem.
\mic{Suppose that we have already decided to open a fixed subset $F^{open}\subseteq F$ of the facilities %; assume $\size{F^{open}} \leqslant k$.
and we look for a mapping $\phi:C \to F^{open}$.
In the uncapacitated case we can just assign each client to the closest facility in $F^{open}$.
It turns out that even in the capacitated setting  we can find the mapping $\phi$ optimally in polynomial time for a given $F^{open}$.}
We state the problem of finding the optimal $\phi$ as an integer program:
\begin{align*}
\mbox{\rm minimize }
 \quad \sum_{c \in C} \sum_{f \in F^{open}} &d_\ell(c, f)\cdot x_{c,f} & &&\mbox{(MAPPING-IP)}\\
\mbox{subject to}  \quad\qquad \sum_{f \in F^{open} } &x_{c,f} = 1 \qquad &\forall c \in C, \\
 \quad \sum_{c \in C}\ &x_{c,f} \leqslant u_f \qquad &\forall f \in F^{open}, \\
 \quad &x_{c,f} \in \{0,1\}.&&
\end{align*}
In the above program $x_{c,f}=1$ represents the fact that $\phi(c)=f$.

\begin{lem}\label{lem:optimal-mapping}
We can find an optimal solution to the \emph{(MAPPING-IP)} in polynomial time.
\end{lem}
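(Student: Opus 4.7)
The plan is to recognize (MAPPING-IP) as a classical transportation/assignment problem and prove it via total unimodularity of the constraint matrix, so that the natural LP relaxation already produces an integral optimum.

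First I would relax the integrality constraint $x_{c,f} \in \{0,1\}$ to $x_{c,f} \geqslant 0$ (the upper bound $x_{c,f} \leqslant 1$ is implied by $\sum_{f} x_{c,f} = 1$). The resulting LP, call it (MAPPING-LP), is polynomially solvable by any general LP algorithm (or more efficiently via min-cost flow), so the only thing to verify is that some optimal vertex solution is integral.

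The central step is to observe that the constraint matrix $A$ of (MAPPING-LP) is the vertex-edge incidence matrix of the complete bipartite graph with vertex set $C \cup F^{open}$, where the variable $x_{c,f}$ corresponds to the edge $(c,f)$. Each column of $A$ has exactly two nonzero entries, both equal to $1$: one in the row indexed by the client $c$ and one in the row indexed by the facility $f$. By the classical Hoffman--Gale theorem (or by the standard inductive argument that partitions the rows into the client-part and the facility-part), any such bipartite incidence matrix is totally unimodular. Since the right-hand side vector — whose entries are $1$ for client rows and $u_f \in \mathbb{Z}_{\geqslant 0}$ for facility rows — is integral, every vertex of the LP polytope is integral. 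Hence any basic optimal solution $x^{\ast}$ returned by an LP solver satisfies $x^{\ast}_{c,f} \in \{0,1\}$ and is therefore feasible and optimal for (MAPPING-IP).

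The only potential obstacle would be guaranteeing that the LP is feasible at all; I would handle this up front by noting that if $\sum_{f \in F^{open}} u_f < |C|$ then no valid assignment exists and the algorithm simply rejects the configuration $F^{open}$, otherwise any greedy fractional assignment witnesses feasibility. The running time is polynomial since both LP solving and, alternatively, a direct reduction to min-cost bipartite $b$-matching on the graph with edge costs $d_{\ell}(c,f)$, capacities $u_f$, and demands $1$ per client, admit well-known strongly polynomial algorithms.
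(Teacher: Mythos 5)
Your proposal is correct and follows essentially the same route as the paper: relax $x_{c,f}\in\{0,1\}$ to $x_{c,f}\geqslant 0$ and invoke total unimodularity of the transportation-problem constraint matrix (equivalently, the bipartite incidence matrix on $C\cup F^{open}$) together with the integral right-hand side to conclude that an extreme optimal LP solution is integral. Your version is merely more explicit about why the matrix is totally unimodular and about rejecting configurations with $\sum_{f\in F^{open}} u_f < |C|$, which the paper leaves implicit.
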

\begin{proof}
The proof follows from the fact that the relaxation of the above integer program --- a program which differs from (MAPPING-IP) only with the $x_{c,f} \geqslant 0$ constraints instead of $x_{c,f} \in \set{0,1}$ ---
has an optimal solution which is integral.
\mic{To see this, observe} that the linear program is a formulation of the transportation problem. %The well known fact is that
For such a \mic{linear program,} the constraint matrix is totally unimodular, which implies the integrality of an extremal solution. See~\cite{schrijver-book} for a reference.
\end{proof}
\janr{Czy argument, że to jest po prostu pełny graf dwudzielny, i dlatego to jest TU, nie jest elegantszy?}
\marr{not exactly, because in the literature the TU argumentation for the incidence matrix of a bipartite graph always is done when one considers an LP of the form $\min cx \mbox{s.t. } Ax \geqslant b$; we have a program with upperbounds in the constraints as well, and only in the transportation problems context the integrality of TU for such problems is shown.}
\micr{or argument by MinCostMaxFlow?}

\subsection{Uniform case}\label{sec:uniform}
    As a warm up, we begin with a parameterized algorithm for the uniform case.
    It is a bit simpler than the general case, because once we know the number
    of facilities to open in $f$-cluster $F(s)$, then we can  choose them
    greedily.

\begin{lem}
\label{lem:fpt-uniform} \UCKM can be solved exactly in time $\ell^{k}\cdot n^{O(1)}$
on $\ell$-centered instances. \end{lem}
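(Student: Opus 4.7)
I would enumerate every possible distribution of the $k$ open facilities among the $\ell$ clusters and, for each such distribution, reduce the remaining choices to the optimal mapping subroutine from Lemma~\ref{lem:optimal-mapping}.

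Exploiting the $\ell$-centered structure, the cost of any assignment $\phi$ decomposes as
\[
\cost{\phi}{d_\ell}=\sum_{c\in C}d(c,s^c)+\sum_{c\in C}d(s^c,s^{\phi(c)})+\sum_{c\in C}d(s^{\phi(c)},\phi(c)),
\]
where the first sum is an offset independent of $\phi$. Call a tuple $(k_s)_{s\in S}$ of nonnegative integers with $\sum_{s\in S}k_s=k$ a \emph{profile}; it prescribes, for each cluster, how many of its facilities are opened. The number of profiles is at most $\ell^k$, obtained by assigning each of the $k$ facility slots to one of the $\ell$ clusters.

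The key claim is that for any fixed profile, one may assume without loss of generality that the $k_s$ facilities opened from $F(s)$ are the $k_s$ elements of $F(s)$ closest to $s$ (breaking ties arbitrarily). This is where uniformity of the capacities is essential: if some solution opens $f\in F(s)$ while leaving a strictly closer $f'\in F(s)$ unopened, then rerouting all of $\phi^{-1}(f)$ to $f'$ and closing $f$ remains feasible because $u_f=u_{f'}=u$. The middle sum in the decomposition is unaffected, and the third sum changes by $|\phi^{-1}(f)|\cdot\br{d(s,f')-d(s,f)}\leqslant 0$. Iterating this swap yields a solution of no larger cost whose open facilities have the claimed form.

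With $F^{open}$ then uniquely determined by the profile, the best compatible assignment is computed in polynomial time by Lemma~\ref{lem:optimal-mapping}. Returning the cheapest solution over all $\ell^k$ profiles gives the claimed running time $\ell^k\cdot n^{O(1)}$. The only substantive step is the exchange argument above, and it is precisely the decoupling of ``which facilities to open within a cluster'' from ``how to route clients'' that breaks in the non-uniform case, forcing the more elaborate treatment in the subsequent section.
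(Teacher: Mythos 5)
Your proposal is correct and follows essentially the same route as the paper: enumerate the at most $\ell^k$ profiles $(k_s)_{s\in S}$, open the $k_s$ facilities of $F(s)$ closest to $s$ (valid by uniformity), solve each resulting assignment problem via Lemma~\ref{lem:optimal-mapping}, and return the cheapest outcome. The only difference is cosmetic: you spell out the exchange argument justifying the greedy choice within a cluster, which the paper states as a without-loss-of-generality assertion.
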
\mesr{Use O* notation throughout the paper.}
\begin{proof}Let $(F \cup C \cup S, d_{\ell})$ be the $\ell$-centered metric.
Note that the $f$-clusters partition the whole set of facilities, i.e., $\cup_{s\in S} F(s) = F$.
Let $OPT\br{d_{\ell}}$ be an optimal solution for the \CKM problem on $d_{\ell}$.
Every facility $f\in F$ belongs to exactly one $f$-cluster $F(s)$.
Hence, the $f$-clusters partition the set of $k$ facilities opened by $OPT(d_{\ell})$.
Let us look at all the facilities from a particular $f$-cluster $F(s)$ opened by $OPT(d_{\ell})$, and suppose that $OPT(d_{\ell})$ opens $k_s$ of facilities in $F(s)$. Since we consider a uniform capacity case, we can assume without loss that these $k_s$ open facilities from  $F(s)$ are exactly the ones that are closest to $s$.

\marr{here $F(s)$ is called cluster but we mean only facilities; maybe we should just called it $f$-cluster}
%belongs to the subtree \mar{define something more precise maybe than a subtree} of some center.

Therefore, if we know what is the number of facilities that $OPT(d_{\ell})$ opens in each $f$-cluster, then we would know what the exact set of open facilities in $OPT\br{d_{\ell}}$ is due to the greediness in each $f$-cluster.
To find out this allocation we can simply enumerate over all possibilities. We just need to scan over all \emph{configurations} $\br{k_s}_{s\in S}$ where $\sum_{s}k_s = k$.
Since there are $k$ facilities to open, and each of them can belong to one of $\ell$ $f$-clusters $F(s)$, there are at most $\ell^k$ possible configurations. Of course some configurations may not be feasible since it may happen that $k_s > \size{F(s)}$, but these can be simply ignored.

For each configuration $\br{k_s}_{s\in S}$ we need to find the optimal mapping of clients to the set of open facilities that preserves their capacities.
Let $F\br{\br{k_s}_{s\in S}}$ be the set of open facilities induced by configuration $\br{k_s}_{s\in S}$, that is, where we greedily open $k_s$ facilities in $f$-cluster $F(s)$.
Given $F\br{\br{k_s}_{s\in S}}$, to find the optimal mapping we use the polynomial time exact algorithm from Lemma~\ref{lem:optimal-mapping} with $F^{open} = F\br{\br{k_s}_{s\in S}}$.

Once we know the optimal assignment for each configuration, we can simply take the cheapest one, knowing that it is the optimal one. This proves the lemma.
\end{proof}

\mic{This lemma suffices to obtain a $\br{7+\eps}$-approximation for \UCKM with a reasoning that we will present in Theorem~\ref{thm:nuckm} in full generality.}

\subsection{Non-uniform case}\label{sec:nonuniform}

\begin{lem}
\label{lem:fpt-non-uniform} \NUCKM can be solved with approximation
ratio $(1+\epsilon)$ in time \\ $\br{O\br{\ell\cdot\frac{1}{\eps}\ln\frac{n}{\eps}}}^{k}n^{O(1)}$
on $\ell$-centered instances.\end{lem}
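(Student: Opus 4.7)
The plan is to refine the enumeration from Lemma~\ref{lem:fpt-uniform}. In the uniform setting, once we fix the count $k_s$ of open facilities per $f$-cluster, the cheapest choice is simply greedy by distance to $s$; with non-uniform capacities this argument collapses, because two facilities in $F(s)$ at similar distances but very different capacities are no longer interchangeable, so the counts alone do not specify which facilities to open.

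To restore enough structure, I would first round each distance $d(f,s^f)$ up to the nearest power of $1+\eps$. After rescaling and truncating distances negligible compared to a guess on $\mathrm{OPT}$, only $B = O(\eps^{-1}\ln(n/\eps))$ distinct rounded values remain, so each $f$-cluster $F(s)$ splits into at most $B$ \emph{layers} $F(s,i)$ in which every facility is at the same rounded distance $(1+\eps)^i$ from $s$. The rounding multiplies the cost of every solution by at most $(1+\eps)$, and inside a fixed layer the distance $d_\ell(c,f)$ now depends only on the client $c$ and on the layer index, not on the specific $f$.

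The enumeration step then ranges over configurations $(k_{s,i})_{s \in S,\,i \in [B]}$ of non-negative integers summing to $k$; placing each of the $k$ open slots into one of $\ell B$ layers shows that there are at most $(\ell B)^{k} = \bigl(O(\ell \cdot \eps^{-1}\ln(n/\eps))\bigr)^{k}$ such configurations. For a fixed configuration I would build a candidate open set $F^{\mathrm{open}}$ by picking, greedily, the $k_{s,i}$ facilities of largest capacity in each layer $F(s,i)$, and then invoke Lemma~\ref{lem:optimal-mapping} to obtain the cheapest assignment for this $F^{\mathrm{open}}$. The algorithm returns the best solution found.

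The main point to verify, and the main obstacle, is that this \emph{greedy-by-capacity} rule within layers is sufficient, i.e.\ that matching the layer profile of $\mathrm{OPT}$ already yields a $(1+\eps)$-approximation. Let $(k^*_{s,i})$ be the profile of $\mathrm{OPT}$ and let $F^*_{s,i},F^G_{s,i} \subseteq F(s,i)$ denote the facilities chosen by $\mathrm{OPT}$ and by greedy for this profile. By the greedy rule,
\[
\sum_{f \in F^G_{s,i}} u_f \;\geq\; \sum_{f \in F^*_{s,i}} u_f,
\]
so all clients routed by $\mathrm{OPT}$ into $F^*_{s,i}$ fit into $F^G_{s,i}$, and because every facility in a single layer is at the same rounded distance from $s$, any such rerouting preserves the cost in the rounded metric. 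Combined with the $(1+\eps)$ rounding loss, this delivers the claimed $(1+\eps)$-approximation, and the running time is $(\ell B)^{k}$ times the polynomial cost of Lemma~\ref{lem:optimal-mapping}, matching the statement.
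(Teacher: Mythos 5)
Your proposal is correct and follows essentially the same route as the paper's proof: anchor a geometric bucketing of the facility--center distances at a guessed scale, enumerate the per-bucket counts $(k_{s,i})$ (at most $\br{O\br{\ell\cdot\frac{1}{\eps}\ln\frac{n}{\eps}}}^{k}$ configurations), open the largest-capacity facilities within each bucket, and solve the assignment via the transportation LP of Lemma~\ref{lem:optimal-mapping}. The only difference is bookkeeping: the paper realizes your ``guess on $\mathrm{OPT}$'' concretely as the largest client--facility distance $D$ used by the optimal solution (only $O(n^2)$ candidates, with $D$ lower-bounding the optimum cost), so the loss from the truncated bottom bucket appears as an explicit additive $\eps D$ term rather than being folded into the multiplicative $(1+\eps)$ factor.
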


\begin{proof}
We begin with guessing the largest distance in $d_\ell$ between a client and a facility that would appear in the optimal solution --- let us denote this quantity as $D$.
There are at most $O(n^2)$ choices for $D$, and from now we assume that it is guessed correctly.
Note that $D \le \cost{OPT(d_\ell)}{d_\ell}$ and $D \ge d(f,s_f)$ for all facilities opened by $OPT(d_\ell)$.
\iffalse
First, without loss let us assume that all the distances in the metric
are less than $OPT$ \marr{TODO: some argument needed why we can guess OPT. one possibility is to use $\lg k$ OPT instead, but the numbers in the derivation will be more nasty}\micr{Just guess the largest $d(s,f)$ in the optimal solution}.
\fi

Consider the set of facilities $F\br s$ in the cluster of a center $s$.
We can remove all facilities $f$ such that $d(s,f) > D$, because they cannot be a part of the optimal solution.
Let us partition remaining facilities
from $F\br s$ into buckets $F_{0}\br s,F_{1}\br s,...,F_{\ceil{\log_{1+\eps}\frac{n}{\eps}}}\br s$,
such that
\[
F_{i}\br s=\begin{cases}
\setst{f\in F\br s}{d\br{s,f}\in\brq{\br{1+\eps}^{-\br{i+1}}D,\br{1+\eps}^{-i}D}} & \mbox{for }i<\ceil{\log_{1+\eps}\frac{n}{\eps}}\\
\\
\setst{f\in F\br s}{d\br{s,f}\in\brq{0,\br{1+\eps}^{-\ceil{\log_{1+\eps}\frac{n}{\eps}}}D}} & \mbox{for }i=\ceil{\log_{1+\eps}\frac{n}{\eps}}
\end{cases}
\]

The number of buckets equals $\log_{1+\eps}\frac{n}{\eps}=\frac{1}{\ln\br{1+\eps}}\ln\frac{n}{\eps}=O\br{\frac{1}{\eps}\ln\frac{n}{\eps}}$.
We modify the metric again by setting $d'_\ell(s,f) = \br{1+\eps}^{-i}D$ for $f \in F_{i}\br s$.
The distances within $S$ remain untouched.
Observe that the distances can only increase.

We shall guess the structure of the solution $OPT(d'_\ell)$ similarly as in Lemma~\ref{lem:fpt-uniform}.
For each of the $k$ facilities,
we can choose its location as follows: first we choose one of the
$\ell$-centers $s$ ($\ell$ choices), and then we choose one of the
$F_{i}\br s$ partitions ($O\br{\frac{1}{\eps}\ln\frac{n}{\eps}}$
choices).
%After such choices are made for all of the $k$ facilities,
Let us denote the number of facilities in a particular partition
$F_{i}\br s$ as $k_{s,i}$.
We can assume that $k_{s,i} \le \size{ F_{i}\br s}$
because otherwise we know that the guess was incorrect.
Since $d'_\ell(s,f)$ is the same for all $f \in F_{i}\br s$, we can assume the optimal solution opens $k_{s,i}$
facilities with the biggest
capacities.

Once we establish the set of facilities to open,
we can find the optimal assignment in metric $d'_\ell$ using the polynomial time exact subroutine from Lemma~\ref{lem:optimal-mapping}.
%Since finding the optimal mapping works in time $n^{O\br 1}$
The total time complexity of solving the problem exactly over $d'_\ell$ equals the running time of the subroutine
times the number of possible configurations, which is $\br{O\br{\ell\cdot\frac{1}{\eps}\ln\frac{n}{\eps}}}^{k}n^{O(1)}$.

\mic{
It remains to prove that the algorithm yields a proper approximation.
We will show that for any solution $SOL$ it holds that
\begin{equation}\label{eq:non-uniform}
\cost{\phi^{SOL}}{d_\ell} \leqslant \cost{\phi^{SOL}}{d'_\ell} \leqslant
			\br{1+\eps}\cdot \cost{\phi^{SOL}}{d_\ell} + \eps\cdot D.
\end{equation}
By substituting $SOL=OPT(d_\ell)$ we learn that there exists a solution over metric $d'_\ell$ of cost at most $\br{1+\eps}\cdot \cost{\phi^{OPT(d_\ell)}}{d_\ell} + \eps\cdot D \le \br{1+2\eps}\cdot \cost{\phi^{OPT(d_\ell)}}{d_\ell}$ for correctly guessed~$D$.
Therefore the cost of the solution found by our algorithm cannot be larger.
Finally we substitute this solution as $SOL$ to see that
its cost cannot increase when returning to metric $d_\ell$.
The claim will follow by adjusting $\eps$.

The first inequality in (\ref{eq:non-uniform}) is straightforward because $d'_\ell$ dominates $d_\ell$.
Consider now a pair $(c,f=\phi^{SOL}(c))$, where $f \in F_{i}\br s$.
If $i<\ceil{\log_{1+\eps}\frac{n}{\eps}}$, then $d_\ell(c,f) \le d'_\ell(c,f) \le (1+\eps)\cdot d_\ell(c,f)$, so the cost of connecting such pairs increases at most by a multiplicative factor $(1+\eps)$ during the metric switch.
If $i = \ceil{\log_{1+\eps}\frac{n}{\eps}}$,
then $d'_\ell(s,f) = \frac{\eps D}{n}$.
Since there are at most $n$ such pairs, the total additive cost increase is bounded by $\eps\cdot D$.
}

\end{proof}

\begin{thm}\label{thm:nuckm}
\NUCKM can be solved with approximation ratio $(7+\epsilon)$ in
time $(k/\epsilon)^{O(k)}n^{O(1)}$. \end{thm}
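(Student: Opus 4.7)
The plan is to glue together Lemma~\ref{lem:fpt-non-uniform} and Lemma~\ref{lem:reduction-to-lcenter}. The former provides a $(1+\epsilon')$-approximation on $\ell$-centered instances, and the latter converts this into a $(1+\epsilon')(3+4\beta)$-approximation for general \NUCKM, given a bicriteria solution $\algunc$ to uncapacitated $k$-median that opens at most $\ell$ facilities and has cost within factor $\beta$ of $OPT^k_{unc}(d)$. To land on the target ratio $(7+\epsilon)$ we need both $\epsilon'=O(\epsilon)$ and $\beta=1+O(\epsilon)$; the final ratio is then $(1+\epsilon')(3+4(1+O(\epsilon))) = 7+O(\epsilon)$, absorbed into the stated form by rescaling $\epsilon$.

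The substantive new ingredient is therefore constructing $\algunc$ with $\beta=1+\epsilon$, and this is where I expect the main work. I would do it by LP rounding on the standard relaxation of uncapacitated $k$-median: solve it in polynomial time to obtain fractional openings $y^*$ and assignments $x^*$ with $\sum_f y_f^* \le k$ and LP cost at most $OPT^k_{unc}(d)$, and then independently open each facility $f$ with probability $\min(1,\gamma y_f^*)$ for $\gamma=\Theta(\log n/\epsilon)$. A standard filtering argument shows that inside each client's ball of radius $(1+\epsilon)$ times its LP connection cost the $y^*$-mass is at least $\epsilon/(1+\epsilon)$, so the $\gamma$-scaling drives the probability that no facility there is opened below $1/n^2$. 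A union bound over clients together with a Chernoff bound on the total number of opens yields, with constant probability (and then deterministically by the standard method of conditional expectations), a bicriteria $(1+\epsilon, O(k\log n/\epsilon))$ uncapacitated solution.

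With $\algunc$ in hand, I would build the $\ell$-centered metric from it as in Section~\ref{sec:l-centered}, apply Lemma~\ref{lem:fpt-non-uniform} to get a $(1+\epsilon')$-approximate solution on the $\ell$-centered instance, and invoke Lemma~\ref{lem:reduction-to-lcenter} to conclude the approximation bound. What remains is the running time. Lemma~\ref{lem:fpt-non-uniform} costs $(O(\ell\epsilon^{-1}\log(n/\epsilon)))^k n^{O(1)}$; plugging in $\ell=O(k\log n/\epsilon)$ gives $(O(k\log^2 n/\epsilon^2))^k n^{O(1)}$. The one technical point in hitting the claimed $(k/\epsilon)^{O(k)} n^{O(1)}$ is absorbing the resulting $(\log n)^{2k}$ factor; I expect to use the elementary inequality $k\log\log n \le k\log k + \log n$ (which follows from $k\log(m/k)\le m$ for $m\ge k$, applied with $m=\log n$ when $\log n\ge k$ and trivially otherwise), yielding $(\log n)^{2k}\le k^{2k}\cdot n^2$ and hence the desired $(k/\epsilon)^{O(k)} n^{O(1)}$ bound.
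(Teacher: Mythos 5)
Your proposal is correct and follows essentially the same route as the paper: combine Lemma~\ref{lem:fpt-non-uniform} with Lemma~\ref{lem:reduction-to-lcenter} via a bicriteria $(1+\eps)$-approximation for \UKM{} opening $\ell = O(k\log n/\eps)$ facilities (so $\beta = 1+\eps$, giving ratio $(1+\eps)(3+4\beta) = 7+O(\eps)$), and then absorb the resulting $(\log n)^{2k}$ factor into $k^{O(k)}n^{O(1)}$. The only differences are minor: the paper simply cites the Lin--Vitter algorithm~\cite{lin92} for the bicriteria subroutine that you re-derive via LP rounding and filtering, and it disposes of the $(\log n)^{2k}$ term by a two-case analysis that is equivalent to your inequality $k\log\log n \leqslant k\log k + \log n$.
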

\begin{proof}
From Lemma~\ref{lem:fpt-non-uniform} we know that we can get a $\br{1+\eps}-$approximation
algorithm for the \NUCKM problem on $\ell$-centered instances in
time $\br{O\br{\ell\cdot\frac{1}{\eps}\ln\frac{n}{\eps}}}^{k}n^{O(1)}$.
We shall use the $\br{1+\eps}$-approximation for \UKM by Lin and Vitter~\cite{lin92}, that opens at most $\ell=\br{1+\frac{1}{\eps}}k\cdot (\ln n + 1)$ facilities. By plugging this subroutine to find $\ell$-centers
into the Lemma~\ref{lem:reduction-to-lcenter} together with Lemma~\ref{lem:fpt-non-uniform}, we obtain a $\br{7+\eps}-$approximation
algorithm for the general \NUCKM problem with running time
$$O\br{\br{\br{1+\frac{1}{\eps}}k\cdot(\ln n+1)\cdot\frac{1}{\eps}\ln\frac{n}{\eps}}^{k}}n^{O(1)}=O\br{\br{\frac{1}{\eps^{O(1)}}k\ln^{2}n}^{k}}n^{O(1)}.$$
Finally, we use standard arguments to show that $\br{\ln n}^{2k} \le \max(n, k^{O(k)})$. Consider two cases. If $\frac{\ln n}{2\ln\ln n}\le k,$
then by inverting we know that $\ln n=O\br{k\ln k}$, and so $\br{\ln n}^{2k}=k^{O(k)}$. Suppose now that $\frac{\ln n}{2\ln\ln n}>k$.
In this case
\[
\br{\ln n}^{2k}<\br{\ln n}^{\frac{\ln n}{\ln\ln n}}=2^{\ln\ln n\cdot\frac{\ln n}{\ln\ln n}}=n.
\]
\end{proof}

\section{\(\OO\br{\log k}\)-approximation}\label{sec:folklore}
\micr{moved this section to the end}
    %Let us now see an example use of Lemma~\ref{lem:reduction-to-lcenter}.
    In this section present a polynomial-time \(\OO\br{\log
    k}\)-approximation algorithm for \CKM{}.
    A constant-factor approximation
    algorithms for \UKM{} exist~\cite{charikar1999constant}, and so it is a
    clear consequence of the Lemma~\ref{lem:reduction-to-lcenter} with \(\beta
    \) being constant that it is sufficient for us to construct an \(\OO\br{\log
    k}\)-approximation algorithm for the \(k\)-centered instances.

    A standard tool to provide such a guarantee is the \emph{Probabilistic Tree Embedding} by~\cite{FakcharoenpholRT03}. \mar{This makes our algorithm a randomized one, but if needed, it is possible to derandomize it using the ideas from~\cite{DBLP:conf/focs/CharikarCGGP98}.}
    \begin{definition}\label{def:tree-embedding}
        A set of metric spaces \(\mathcal{T}\) together with a probability distribution $\pi_{\mathcal{T}}$ over $\mathcal{T}$ probabilistically
        \(\alpha\)-approximates the metric space \((X, d)\) if
        \begin{enumerate}
            \item Every metric \(\tau \in \mathcal{T}\) dominates \((X, d)\),
                that is, \(d(x, y) \leqslant \tau(x, y)\ \forall x,y \in X\).
            \item For every pair of points \(x, y \in X\) its expected distance
                is not expanded by more then \(\alpha\), i.e.,
                \[
                    \mathbb{E}_{\tau \sim \pi_{\mathcal{T}}}[\tau(x, y)] \leqslant
                    \alpha \cdot d(x, y).
                \]
        \end{enumerate}
    \end{definition}
    It is a well-known fact, that any metric \((X, d)\), can be
    probabilistically \(\OO(\log |X|)\)-approximated by a distribution of tree
    metrics, such that the points in \(X\) are the leaves in the resulting
    tree~\cite{FakcharoenpholRT03}.

    As described in Definition~\ref{def:l-centered}, our \(k\)-centered metric
    \(d_k\) is induced by a graph composed of two layers --- the set \(S\) of
    \(k\) vertices connected in a clique, and the rest of vertices, \(F \cup
    C\), each connected to only one vertex in \(S\). Let \(T\) be a random tree
    embedding of the set \(S\) (with a metric function \(d_T\)). A modified
    instance \(G_T\) of our problem is created by replacing the clique \(S\)
    with its tree approximation \(T\).

    \begin{lemma}\label{lem:tree-opt}
        An optimum solution for \CKM{} on the instance \(G_T\) is in expectation
        at most \(\OO\br{\log k}\) times larger than the optimum for the metric
        \(d_k\).
    \end{lemma}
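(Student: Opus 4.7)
The plan is to exhibit a specific feasible solution in $G_T$ whose expected cost is $\OO(\log k)$ times $\cost{\phi^{OPT(d_k)}}{d_k}$, which then upper bounds the optimum on $G_T$. Since the vertex sets of $d_k$ and $G_T$ agree and the capacity/cardinality constraints are unchanged (the embedding only alters distances within $S$), the optimal assignment $\phi^{OPT(d_k)}$ is immediately feasible in $G_T$ as well.

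First I would decompose the distance between any client $c$ and any facility $f$ in the $k$-centered metric as $d_k(c,f) = d(c,s^c) + d(s^c,s^f) + d(s^f,f)$, using the structure from Definition~\ref{def:l-centered}. In $G_T$ the only thing that changes is the middle term, which becomes $d_T(s^c,s^f)$. Summing over the pairs $(c, \phi^{OPT(d_k)}(c))$ expresses $\cost{\phi^{OPT(d_k)}}{G_T}$ as $\cost{\phi^{OPT(d_k)}}{d_k}$ with each middle term replaced by its tree analogue.

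Next I would take expectation over the random tree $T$ drawn from the FRT distribution. By linearity of expectation and the guarantee from Definition~\ref{def:tree-embedding} applied to the $k$-point metric $(S, d|_S)$, each term $d(s^c, s^f)$ is blown up by at most a factor $\OO(\log k)$ in expectation, while the client-to-center and facility-to-center edges are untouched. Bounding the unaltered terms by the same $\OO(\log k)$ factor (trivially, since $\log k \ge 1$), the expected cost of $\phi^{OPT(d_k)}$ in $G_T$ is at most $\OO(\log k) \cdot \cost{\phi^{OPT(d_k)}}{d_k}$. The optimum on $G_T$ can only be smaller, which yields the claimed bound.

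The argument is essentially routine once the decomposition and the FRT guarantee are in place; the only subtle point to make explicit is that the tree embedding is applied solely to the metric restricted to $S$ (so the distortion depends on $|S| \le k$, not on $|F \cup C|$), and that replacing the clique on $S$ by the tree $T$ while leaving the pendant edges from $F \cup C$ intact preserves the validity of the client-to-facility assignment. No other obstacle arises.
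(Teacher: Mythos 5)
Your proposal is correct and follows essentially the same route as the paper: decompose $d_k(c,f)$ into the client-to-center, center-to-center, and center-to-facility terms, apply the FRT guarantee (in expectation) only to the middle term over the $k$-point metric on $S$, and conclude by noting that the optimum on $G_T$ can only be cheaper than the embedded solution $\phi^{OPT(d_k)}$. Your explicit remarks on feasibility of the assignment in $G_T$ and on the distortion depending on $|S|\leqslant k$ are points the paper leaves implicit, but the argument is the same.
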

    \begin{proof}

        \(OPT\br{d_{k}}\) denotes the optimum mapping of clients to \(k\)
        facilities in the \(k\)-centered metric \(d_k\). Consider client \(c\)
        and facility \(f = \phi^{OPT\br{d_{k}}}\br{c}\). % serving him in the optimum solution.
        Let now \(s^c\) be the center of \(c\) and \(s^f\) the center
        of \(f\). The cost of connecting client \(c\) to \(f\) amounts to
        \[
            d_k\br{c, f} = d_k\br{c, s^c} + d_k\br{s^c, s^f} + d_k\br{s^f, f}
        \]
        in the metric \(d_k\).

        The guarantee of tree embeddings gives us an upper bound on a cost of
        applying the same mapping in the instance \(G_T\),
        \begin{eqnarray*}
            \mathbb{E}\brq{d_T\br{c, f}} & =
                & d_k\br{c, s^c} + \mathbb{E}\brq{d_T\br{s^c, s^f}} + d_k\br{s^f, f} \\
                & \leqslant & d_k\br{c, s^c} + \OO\br{\log k} \cdot d_k\br{s^c, s^f}  + d_k\br{s^f, f} \\
                & \leqslant & \OO\br{\log k} \cdot d_k\br{c, f}.
        \end{eqnarray*}

        Which means that \(\mathbb{E}\brq{cost\br{\phi^{OPT\br{d_{k}}}, d_T}}
        \leqslant \OO\br{\log k} \cdot cost\br{\phi^{OPT\br{d_{k}}}, d_{G_k}}\).
        Moreover, \(OPT\br{d_{k}}\) might not be the optimal solution for the
        metric \(d_{T}\), yet its optimal solution can only have
        smaller cost:
        \[
            cost\br{\phi^{OPT\br{d_{T}}}, d_T} \leqslant
            cost\br{\phi^{OPT\br{d_{k}}}, d_T}
        \]
    \end{proof}
    
    \begin{thm}
        The CKM problem admits an \(\OO(\log k)\)-approximation algorithm with
        polynomial running time.
    \end{thm}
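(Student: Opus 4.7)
The plan is to combine three ingredients: the reduction in Lemma~\ref{lem:reduction-to-lcenter}, the tree-embedding bound in Lemma~\ref{lem:tree-opt}, and a polynomial-time exact algorithm for \CKM{} on tree metrics. First, I would invoke a constant-factor approximation for \UKM{} (for instance, the one from~\cite{charikar1999constant}) with $\ell=k$, so that $\beta = O(1)$ in the notation of Lemma~\ref{lem:reduction-to-lcenter}. By that lemma, it suffices to produce an $\OO(\log k)$-approximation for \CKM{} on $k$-centered instances.

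Given a $k$-centered instance with metric $d_k$, I would sample a random tree $T$ from a probabilistic $\OO(\log k)$-embedding of $(S, d_k|_S)$ in the sense of Definition~\ref{def:tree-embedding}, and form the graph $G_T$ by replacing the clique on $S$ with $T$ while keeping all pendant facilities and clients attached to their centers. Let $d_T$ denote the resulting tree metric on $F \cup C \cup V(T)$. Lemma~\ref{lem:tree-opt} already ensures
\[
\ex{\cost{\phi^{OPT(d_k)}}{d_T}} \leqslant \OO(\log k)\cdot \cost{\phi^{OPT(d_k)}}{d_k},
\]
so any feasible $\alpha$-approximate solution computed on $d_T$ (with $\alpha = 1$) transfers back to $d_k$ with an expected overhead of $\OO(\log k)$, since $d_T$ dominates $d_k$ on the pendant edges and the tree path between two centers is no shorter than $d_k$ between them.

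The central step, and the main obstacle, is to solve \CKM{} \emph{exactly} on the tree metric $d_T$ in polynomial time. My plan is a bottom-up dynamic program on $T$ after rooting it arbitrarily. For each node $v$ of $T$, for each number $j \in \{0, 1, \dots, k\}$ of facilities opened inside the subtree rooted at $v$, and for each net flow value $i \in \{-n,\dots,n\}$ representing how many clients must still be routed \emph{out of} (or facilities must absorb from \emph{outside} of) the subtree, I would store the minimum connection cost achievable within the subtree subject to satisfying all internal capacity and demand constraints. The transition at an internal node combines the tables of its children by convolving over $(j,i)$, paying $|i|$ times the weight of the edge above $v$ for the clients that cross it. At a leaf center $s \in S$, I incorporate the pendant facilities and clients of the cluster of $s$, which reduces to the bipartite transportation subroutine of Lemma~\ref{lem:optimal-mapping} parameterized by how many facilities to open, which total capacity to devote internally, and how much net flow to export upward; all of these are small in the relevant ranges. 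The number of states is $\OO(n^2 k)$ per node and the transitions are polynomial, giving an overall polynomial running time.

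Finally, I would read off the optimal solution on $d_T$, interpret it on the original metric $d$ by the identity assignment of clients to opened facilities, and appeal to Lemma~\ref{lem:back-to-d} together with Lemma~\ref{lem:tree-opt} to conclude that its expected cost is $\OO(\log k)\cdot \cost{\phi^{OPT(d_k)}}{d_k}$. Composing with Lemma~\ref{lem:reduction-to-lcenter} and the constant $\beta$ from the UKM subroutine then yields the claimed $\OO(\log k)$-approximation in expectation; standard repetition or derandomization via~\cite{DBLP:conf/focs/CharikarCGGP98} removes the randomness.
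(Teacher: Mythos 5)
Your overall route is the same as the paper's: reduce to $k$-centered instances via Lemma~\ref{lem:reduction-to-lcenter} with a constant-factor \UKM{} subroutine ($\ell = k$, $\beta = \OO(1)$), embed the clique on $S$ into a random tree with $\OO(\log k)$ expected distortion, solve \CKM{} exactly on the resulting tree instance by a dynamic program over triples (subtree, number of opened facilities, net flow across the parent edge), and transfer the solution back using the domination property and Lemma~\ref{lem:back-to-d} together with the bound of Lemma~\ref{lem:tree-opt}.

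The one step that does not work as written is your treatment of a center leaf $s$: you claim the cluster of $s$ can be absorbed by ``the bipartite transportation subroutine of Lemma~\ref{lem:optimal-mapping} parameterized by how many facilities to open''. Lemma~\ref{lem:optimal-mapping} assumes the open set $F^{open}$ is already fixed; with non-uniform capacities, knowing only the \emph{number} $j$ of facilities to open inside $F(s)$ does not determine which ones to open --- there is a real trade-off between small $d(s,f)$ and large $u_f$ (this is exactly the difficulty the paper's FPT part has to handle by bucketing distances), so that lemma cannot be invoked directly. The gap is fixable within your framework: either run an inner polynomial DP over the facilities of the cluster (state: number opened so far and number of demand units they serve, the latter bounded by $n$), or do what the paper does --- keep every pendant client and facility as its own leaf of the tree, turn the tree into a binary one with zero-length dummy edges, and let the global table $D(t,k',b)$ itself decide which facilities to open (Lemma~\ref{lem:ckm-tree}). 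With that correction your argument goes through and coincides with the paper's proof.
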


    \begin{proof}
        After applying the probabilistic tree embedding to the graph inducing
        \(d_k\) --- as presented in Lemma~\ref{lem:tree-opt} --- we obtain a
        tree instance \(G_T\).
\mic{It should come as no surprise that the problem is polynomially solvable on trees and we explain how to find the optimum solution on \(G_T\) in Lemma~\ref{lem:ckm-tree}}.
       %Using the algorithm from the previous subsection, we find an optimum solution \(OPT(d_T)\).
       The assignment
        \(\phi^{OPT(d_T)}\), which yields the minimum cost on the tree \(G_T\),
        can be now used to match clients to facilities in the original instance.
        It does not incur any additional cost, as
        \[
            cost\br{\phi^{OPT(d_T)}, d_T} \geqslant
            cost\br{\phi^{OPT(d_T)}, d_{k}}
            \geqslant
            cost\br{\phi^{OPT(d_T)}, d}
        \]
        \sloppy from the property (1) of Definition~\ref{def:tree-embedding} and
        Lemma~\ref{lem:back-to-d}. Combining this with a bound on
        \(\mathbb{E}\brq{cost\br{\phi^{OPT\br{d_{k}}}, d_T}}\) from
        Lemma~\ref{lem:tree-opt} finishes the proof.
\end{proof}

    \subsection{\CKM{} on a tree}
    The second ingredient to the \(\OO(\log k)\)-approximation for \CKM{} is
    solving the problem exactly on trees. We will now describe a simple, exact,
    polynomial algorithm for that special case. In our algorithm we can assume,
    that all the clients and facilities reside in leaves, but the principle is
    easy to extend to the general problem on trees.

    Imagine we have a subtree \(t\) of the tree instance, hanging on an edge
    \(e_t\). Once we have decided, which facilities to open inside the subtree
    \(t\), we know if their total capacity is sufficient to serve all the
    clients inside \(t\). If not, then we need to route some clients'
    connections to the facilities outside through the edge \(e_t\). However, if
    the facilities we have opened in \(t\) have enough total capacity to serve
    some \(b\) clients from the outside, we will connect them through the edge
    \(e_t\) (see Figure~\ref{fig:folk-subtree}).
    \begin{figure}
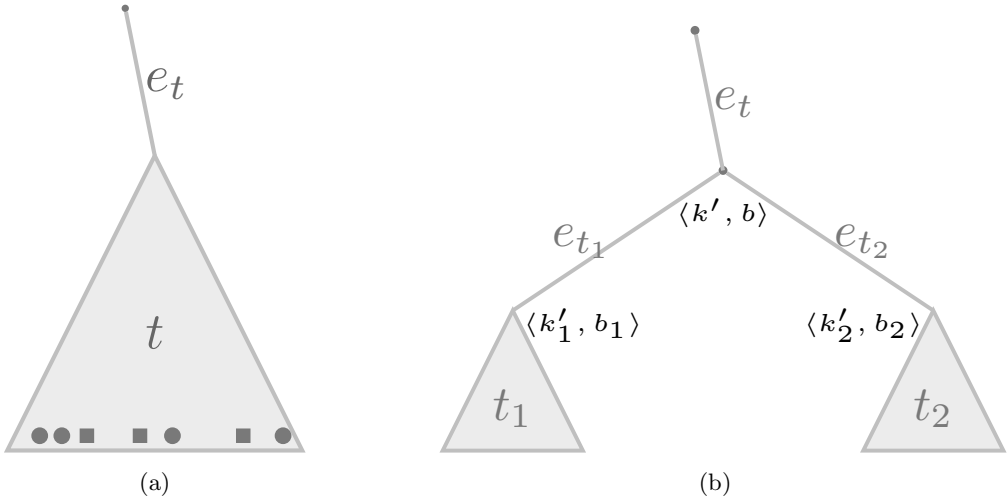

        \centering{}
        \begin{minipage}[b]{.33\linewidth}
            \centering{}
            \ifdefined\STANDALONE{}
            	\includestandalone[width=.8\textwidth]{figures/subtree}
			\fi
            \subcaption{}
        \end{minipage}
        \begin{minipage}[b]{.64\linewidth}
            \centering{}
            \ifdefined\STANDALONE{}
           		\includestandalone[width=.8\textwidth]{figures/dynamic}
            \fi
            \subcaption{}
        \end{minipage}
        \caption{(a) A subtree \(t\) with some open facilities (squares) and a
            number of clients (circles). If the total capacity of facilities
            opened in \(t\) exceeds the needs of the clients, we may decide to
            connect some clients from outside through edge \(e_t\). If the
            capacity of open clients is too small, we may connect some clients
            with outside facilities through \(e_t\). It never makes sense to do
            both. (b)~A dynamic programming step. To compute \(D(t, k', b)\) we
            need to find the cheapest solutions for subtrees, such that \(k'_1 +
            k'_2 = k'\) and \(b_1 + b_2 = b\). The additional cost incurred is
            \(d(e_t)\cdot |b| \).}\label{fig:folk-subtree}
    \end{figure}
    This insight lays out the dynamic algorithm for us.
%    We first binarize the tree by adding dummy vertices and edges of length \(0\) (which may double its size).
\mes{We first turn the tree into a complete binary tree by adding dummy vertices and edges of length \(0\) (which may double its size).}
Then, for every subtree \(t\), numbers \(k'\) and \(b\), we
    compute \(D(t, k', b)\).
    \begin{definition}
        \(D(t, k', b)\), for subtree \(t\), number \(k' \in \{0, \dots, k\} \)
        of facilities and balance \(b \in \{-n, \dots, n\} \), is the minimum
        cost of opening exactly \(k'\) facilities in \(t\) and routing exactly
        \(b\) clients down through \(e_t\) (\(b < 0\) would mean that we are
        routing \(-b\) clients up). The cost of routing is counted to the top
        endpoint of \(e_t\).
    \end{definition}
    
\begin{lem}\label{lem:ckm-tree}
The \CKM{} problem on trees admits a polynomial time exact algorithm.
\end{lem}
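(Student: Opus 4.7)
The plan is to realize the dynamic programming sketched above the statement. First, as already suggested, convert the input tree into a complete binary tree by introducing dummy internal nodes and zero-length edges; this at most doubles the size and lets us assume every non-leaf has exactly two children, while clients and facilities sit at the leaves. Second, I would adopt the state $D(t,k',b)$ as defined, where $t$ is a subtree, $k' \in \{0,1,\dots,k\}$ is the number of open facilities inside $t$, and $b \in \{-n,\dots,n\}$ is the net number of clients crossing $e_t$ into $t$ (positive) or out of $t$ (negative). Feasibility of a state requires that the open facilities in $t$ have total capacity at least (clients inside $t$) $+ \max(b,0)$, and that the demand crossing upward is at most $-b$ when $b<0$; infeasible states are set to $+\infty$.

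The base case is a leaf $v$. If $v$ carries a facility $f$ with capacity $u_f$, then $D(v,1,b)$ is $0$ whenever $-u_f \le b \le u_f - \mathbf{1}[v \text{ is a client}]$ (and $D(v,0,b)$ forces the client, if any, to be routed up), while if $v$ carries only a client, $D(v,0,-1) = 0$ and other entries are $\infty$. For an internal node $t$ with children subtrees $t_1, t_2$ hanging on edges $e_{t_1}, e_{t_2}$, the recurrence is the natural one,
\[
D(t,k',b) \;=\; \min_{\substack{k'_1+k'_2=k' \\ b_1+b_2 = b}} \Bigl( D(t_1,k'_1,b_1) + D(t_2,k'_2,b_2) + d(e_{t_1})\cdot|b_1| + d(e_{t_2})\cdot|b_2|\Bigr),
\]
charging the cost of the $|b_i|$ connections that traverse the edge $e_{t_i}$ to the parent side, exactly as in the figure. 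The answer to the \CKM{} instance is $D(r,k,0)$, where $r$ is the root, since a global feasible solution corresponds precisely to opening exactly $k$ facilities in the whole tree with zero net flow across the (nonexistent) edge above the root.

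Correctness follows by induction on the subtree: any feasible solution on $t$ partitions its opened facilities and its boundary traffic uniquely between the two children, so the minimum over all splits of $(k'_1,k'_2)$ and $(b_1,b_2)$ recovers the optimum; the additive edge cost $d(e_{t_i})|b_i|$ correctly counts each traversing client exactly once at the right edge. For the running time, there are $O(n)$ subtrees, $O(k)$ choices of $k'$, and $O(n)$ choices of $b$, so the table has $O(n^2 k)$ entries; each entry is computed by a convolution over pairs $(k'_1,b_1)$, which takes $O(nk)$ time. The overall running time is $O(n^3 k^2)$, which is polynomial, giving the claim.

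The one mildly delicate point, which I would spell out carefully, is the bookkeeping of capacities: when several facilities are opened inside $t$ with surplus capacity, it does not matter which of them absorbs the incoming clients, since inside-$t$ connections have already been paid for in $D(t_i,\cdot,\cdot)$; what matters is only the aggregate surplus versus aggregate deficit, and this is captured correctly by the single integer $b$. Once that is clear, the rest of the proof is a routine induction and a complexity count.
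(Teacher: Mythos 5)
Your proposal is correct and follows essentially the same approach as the paper: the identical dynamic program with state $D(t,k',b)$ over a binarized tree, the same split of $k'$ and $b$ between the two children, and the same polynomial bookkeeping (your only deviation is charging the edge costs $d(e_{t_1})|b_1|+d(e_{t_2})|b_2|$ at the parent rather than charging $d(e_t)|b|$ when computing $D(t,k',b)$, which is an equivalent convention). Two cosmetic remarks: the paper returns $\min_{k'\le k} D(T,k',0)$ rather than $D(r,k,0)$, which also covers instances with fewer than $k$ facilities, and at a facility-only leaf it suffices to allow $0\le b\le u_f$ (your negative range is harmless, since a ``phantom'' emission from a facility can never decrease the optimum, but it is unnecessary).
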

\begin{proof}
    Computing \(D(t, k', b)\) on \(t\) with two children \(t_1\) and \(t_2\)
    amounts to finding \(k'_1\), \(k'_2\), \(b_1\) and \(b_2\) that minimize
    \[
        D(t_1, k'_1, b_1) + D(t_2, k'_2, b_2),
    \] such that \(b_1 + b_2 = b\) and \(k'_1 + k'_2 = k'\). They can be
    trivially found in \(\OO\br{k \cdot n}\) time for a single pair \(\langle
    k', b\rangle \). Once \(k'_1\), \(k'_2\), \(b_1\) and \(b_2\) are found, we
    set
    \[
        D(t, k', b) = D(t_1, k'_1, b_1) + D(t_2, k'_2, b_2) + d(e_t) \cdot |b|,
    \]
    where \(d(e)\) is the length of the edge in our tree. For a leaf \(l\),
    \(D(l, k', b)\) is defined naturally, depending on whether the leaf holds a
    client or a facility. Note, that for a leaf with a facility, \(D(l, 1, b)\)
    is finite also for \(b\) smaller than the capacity of the
    facility, as the optimal solution might not use it entirely. Finally, the
    optimum solution to the \CKM{} problem on the entire tree \(T\) is equal to
    \[
        \min_{k' \in \{1, \dots, k\}}  D(T, k', 0).
    \]
\end{proof}

\section{Conclusions and open problems}
\mic{We have presented a $\br{7+\eps}$-approximation algorithm for the \CKM problem,
which consists of three building blocks:
approximation for \UKM, metric embedding into a simpler structure, and a parameterized algorithm working on $\ell$-centered instances.

Whereas the first and the last ingredient are almost lossless from the approximation point of view, the embedding procedure seems to be the main bottleneck for obtaining a better approximation guarantee.
One can imagine that a different technique would allow to obtain a $\br{1+\eps}$-approximation in \FPT{} time.
We believe that finding such an algorithm or ruling out its existence is an interesting research direction.

Another avenue for improvement is processing $k$-centered instances in time $2^{\OO(k)}n^{\OO(1)}$.
Such a~routine would reduce the running time of the whole algorithm to single exponential.
In order to do so, one could replace the subroutine for \UKM by Lin and Vitter~\cite{lin92} with a~standard approximation algorithm that opens exactly $k$ facilities, what would moderately increase the constant in approximation ratio.

Finally, whereas we have used the framework of $\ell$-centered instances to devise an FPT approximation,
it might be possible to explore the structure of special instances further and find a~polynomial time approximation algorithm. This could yield an improvement over the $\OO(\log k)$-approximation ratio for \CKM, which remains a major open problem.
}

\iffalse
 
. The first ingredients was a reduction to a simpler instance created with an algorithm for the \UKM version of the problem. Second was based on an enumeration procedure. The second ingredient was almost lossless in the sense that for a reduced instance gave a $\br{1+\eps}$-approximation. Thus the bottleneck for getting a better approximation for the problem is the step where we reduce the problem to an $\ell$-centered instance, and our $\br{7+\eps}$ almost matches its limits. Hence one can imagine that a more robust argument would allow for getting a $\br{1+\eps}$-approximation in FPT time. And we believe that finding such an algorithm is an interesting open problem. 
\fi

\marr{some more?}
\marr{maybe we can add that in light of recent developments on approximation in FPT time [Cygan et al.] one could try to show that 1+eps approximation is actually not possible? then again, already apx-hardness is not known so i don't know if this is not even a harder question}

\newpage
\bibliographystyle{abbrv}
\bibliography{kMedianbib}

\end{document}